\numberwithin{equation}{section}
\newcommand{\ts}{\textsuperscript}
\newcommand{\me}{\mathrm{e}}
\DeclarePairedDelimiter\abs{\lvert}{\rvert}%
\DeclarePairedDelimiter\norm{\lVert}{\rVert}%
\let\oldabs\abs
\def\abs{\@ifstar{\oldabs}{\oldabs*}}
\let\oldnorm\norm
\def\norm{\@ifstar{\oldnorm}{\oldnorm*}}
\newcommand\HUGE{\@setfontsize\Huge{20}{46}}
\newcommand\BIG{\@setfontsize\Large{16}{16}}
\newtheorem{theorem}{Theorem}[section]
\title{
\HUGE Dynamic Network $3$ -- $0$ FIFA Rankings: \\
\vspace{0.5cm}
\BIG Replacing an inaccurate, biased, and exploitable ranking system
\vspace{2.5cm}
}
\author[1,2]{Sam Abernethy}
\affil[1]{Mathematical Institute, University of Oxford, Oxford, UK, OX2 6GG}
\affil[2]{Applied Physics Department, Stanford University, Stanford, USA, 94305}
\begin{document}
\begin{titlepage}
\maketitle

\vspace{\fill}
\centering

\begin{abstract}
We explore the advantages of representing international football results as a directed network in order to give each team a rank. Two network-based models --- Static and Dynamic --- are constructed and compared with the FIFA Rankings. The Dynamic Model outperforms the FIFA Rankings in terms of World Cup predictive accuracy, while also removing continental bias and reducing the vulnerability of the FIFA Rankings to exploitation.

\end{abstract}

\pagenumbering{gobble}
\end{titlepage}

\newpage

\pagenumbering{arabic}
\setcounter{page}{1}

\section{Introduction}

For a wide variety of sports, the central organising body ranks the participating teams based on their results. These ranking systems serve two primary purposes: to reward teams for past successes, and to provide predictions about the relative quality of the teams. In this work, we address only one sport --- association football or soccer --- at the international level. The governing body for international football is the F\'{e}d\'{e}ration Internationale de Football Association (FIFA), who release monthly rankings of each team. These rankings will serve as the benchmark for this work, and will be denoted the `FIFA Rankings.' 

Ranking systems measure teams' past success, but they can also influence teams' future success. For major tournaments such as the FIFA World Cup and continental championships, teams are divided into pools based on their rank. Having a higher rank means that you will be in a pool with, and play against, teams who are weaker on average \cite{Guyon}. As such, ranking systems can influence which teams progress to the knockout stages of tournaments --- earning millions of dollars in the process. For that reason, among others, one hopes that the FIFA Rankings are fair, unbiased, robust, rewarding of past success, and predictive of future results. However, as we will explain in Section \ref{sec:bad}, this is not the case.

There has been extensive research to improve ranking systems such as FIFA's \cite{stef, dan, moon, langville2}. In terms of football ranking systems, the main division is between those that predict goals scored and conceded, and those that predict the win-loss outcome directly \cite{Bootsma}. We focus our attention on the latter category. Approaches for predicting the win-loss outcome range from Elo models wherein teams exchange points after every match \cite{Kovalchik2016, Mangan2016, Xiong} to Google's PageRank algorithm, adapted from measuring the importance of websites, wherein rankings are derived from a network \cite{Chartier2011, langville, Lazova2015, Radicchi2011}, to more sophisticated random forest and physically-inspired models \cite{Bacco, Groll2018, Criado2013}.
It has been proven that some Elo models, with finely tuned parameters, outperform the FIFA Rankings in predicting the outcome of football matches \cite{Lasek2013}. However, these models often involve unintuitive parameters with no real-world meaning, leading to an enigmatic system that is not meaningful to teams and fans alike. It is important to stress that the problem of ranking in sports finds many connections with the problem of defining importance in networks, where a variety of centrality measures have been developed based on different principles.

The main purpose of this paper is to show that network-based methods provide a competitive alternative to the FIFA Rankings due to their simplicity, interpretability, and high predictive power. To do so, we consider the concept of indirect wins, inspired by recursive centrality measures like eigenvector centrality and Katz centrality in network science \cite{Katz}.
After constructing a Static Model based on related work for American football \cite{Newman2005, Keener}, we extend it by incorporating the temporal dimension of the games. Our Dynamic Model is constructed by placing less importance on wins from long in the past, as in \cite{Motegi2012}. These are referred to as bygone wins. The two models, which can be viewed as representative examples of network-based approaches, are then tested on the largest sporting event in the world: the World Cup, held every four years.


The structure of this work is as follows. In Section \ref{sec:fifa} we describe and analyse the flaws of the FIFA Rankings. Based on these flaws, we develop two network-based models --- Static (Section \ref{sec:static}) and Dynamic (Section \ref{sec:dynamic}) --- and compare them with the FIFA Rankings. The implications of these models are analysed in Section \ref{sec:pred} and discussed in Section \ref{sec:disc}.

\section{FIFA Rankings}\label{sec:fifa}
\subsection{Methodology}

In the FIFA Rankings, each national team has a point score which determines its rank: the more points, the higher the rank. Using the information on FIFA's website \cite{fifa} we will describe the methodology for calculating a team's points.

The number of \textbf{P}oints gained in a match is determined by multiplying factors for the \textbf{M}atch result, the match \textbf{I}mportance, the \textbf{C}ontinental strength, and the opposing \textbf{T}eam: 
\begin{equation}
\bm{P} = \bm{M} \times \bm{I} \times \bm{C} \times \bm{T}.
\end{equation}

For the \textbf{M}atch result, teams gain $3$ points for a win, $1$ point for a draw, and $0$ points for a loss. If the match is decided in a penalty shoot-out, the winner gains $2$ points while the loser gains $1$ point.

The match \textbf{I}mportance is equal to $4$ for World Cup matches, $3$ for continental tournaments, $2.5$ for World Cup or continental qualifiers, and $1$ for friendly matches.

The \textbf{C}ontinental strength factor is the average of the two teams' continental strengths: $1$ for South America, $0.99$ for Europe, and $0.85$ for all other continents.

Finally, the opposing \textbf{T}eam factor is given by $(200 - R_o)$, where $R_o$ is the rank of the opponent in the most recent FIFA Ranking (with a minimum of $50$).

For each team, the points gained in all matches within the past year are \textit{averaged} to give $P_{Y_1}$. Similarly, they are averaged for months $12$--$24$ to give $P_{Y_2}$ and so on for $P_{Y_3}$ and $P_{Y_4}$. The total points used in the FIFA Rankings come from a block-wise depreciating average calculated over a moving four-year window:
\begin{equation}\label{eq:fifa}
P_{\text{FIFA}} =  P_{Y_1} +0.5 \times P_{Y_2} + 0.3 \times P_{Y_3} + 0.2 \times P_{Y_4}.
\end{equation}

In the first week of each October preceding a World Cup, the current FIFA Ranking is used to draw the World Cup pools from the teams that have qualified. To standardise our analysis, we will use this same time when examining the network-based ranking systems throughout this work.

The simplicity of this ranking system, which has been in place since 2006, was largely motivated by widespread criticism received for earlier iterations. The previous system (used from 1999 until 2006) had many more parameters and the following arguments against the modern system would likely hold against previous systems as well.

\subsection{Critiques}\label{sec:bad}

There are three main critiques of the FIFA Rankings: its predictive accuracy is low, it is biased towards certain continents, and it can be exploited by careful selection of which teams should be played when. We will address these three issues in turn, aiming to demonstrate that the FIFA Rankings are inadequate and should be replaced.

We take the metric of predictive accuracy to be the percentage of matches at a World Cup that were predicted by the rankings from the preceding October, when the pools were drawn. Although there are many other metrics which could be chosen (binomial deviance or mean squared error to name a few), we take this value to be readily understandable and a good indicator of a ranking system's predictive accuracy: purely how many matches it predicts. We consider only matches which have a winner, discarding group stage matches which end in a draw. For this reason, predictive accuracy is the percentage of matches (with a winner) whose winner was the higher ranked team. For the World Cups from 1994 to 2002, the FIFA Rankings predicted $69\% \pm 3\%$ of the wins \cite{statslife}. In the past three World Cups, their accuracy has risen slightly to $71\% \pm 11\%$. Although this accuracy is above $50\%$, it is not as high as other methods such as Elo \cite{Lasek2013}.

The second critique of the FIFA Rankings is that they are biased based on continent. Consider a simple scenario in which Australia and Chile have the same rank, as they nearly do in 2018. If they each beat Brazil in a friendly, Australia would receive $92.5\%$ of the points that Chile receives --- for the exact same outcome. (This is because the continental strength factor, averaged between the two teams in a match, is $1$ for Chile and Brazil and only $0.85$ for Australia.) This is a continental bias because the relative strengths of the teams are double counted: the opposing team factor \textit{and} the continental strength factor means that teams from traditionally weak continents are disadvantaged. Although FIFA likely did this to maximise their predictive accuracy --- given the poor model they chose to construct --- it results in a clear continental bias.


The final critique that we will address is the fact that the FIFA Rankings are exploitable. Due to the averaging of point scores, teams that play few friendly matches currently stand to gain the most from the current system. This fact negatively impacts teams such as England that refuse to alter their schedule to suit the FIFA Rankings \cite{telegraph}. A summary analysis of the optimal matches for teams to play has been released by Lasek et al. \cite{Lasek2016}. They argue, for example, that Ukraine could have been in a higher pool for the 2016 European championship had Ukraine followed their proposed methodology for expected point maximisation. 

\subsection{How can it be improved?}

Our motivation for adopting a network-based model for international football is threefold: improve predictive accuracy, remove the continental bias, and develop a more robust ranking system that is not exploitable. 

A crucial source of data, currently ignored by FIFA, is the concept of indirect wins (as in \cite{Keener, Newman2005} based on the work of Katz \cite{Katz}). The classic argument goes as follows: ``my team A beat team B, who in turn beat your team C --- therefore, my team is better than your team." This argument makes intuitive sense and also increases predictive accuracy when implemented in similar ranking systems for wrestling \cite{Bigsby2017}, tennis \cite{Motegi2012}, and American football \cite{Newman2005} so we choose to incorporate it as the central tenet of our models.

The other two goals are also addressed using this methodology. The team and continent multipliers will be replaced by the concept of indirect wins; beating a strong team gives you many indirect wins, which removes the need to scale points based on the opposition or their continent. Furthermore, we will no longer average the point scores, thereby removing the ability to exploit the system by playing as few friendly matches as possible.

To achieve these goals, we will construct two network-based models --- Static and Dynamic --- with intuitive parameters chosen based on predictive accuracy. We aim to make the ranking algorithms readily understandable, unbiased, and not exploitable, \textit{before} optimizing their predictive power. There is a tradeoff between simplicity and predictive accuracy that we hope to mitigate by presenting two models --- the Dynamic Model will be more complex and, we hypothesise, consequently more accurate in its predictions. 


\section{Static Model}\label{sec:static}
\subsection{Derivation}

To construct a network-based model for international football, we must first choose the data of importance. In a world brimming with massive quantities of data, there are clearly many unimportant factors we could include --- did Lionel Messi play $72$ minutes or more in his last away match in a country in the Northern Hemisphere?  
As such, we make a significant choice, as in \cite{Newman2005} and Keener's direct approach \cite{Keener}: take only the win-draw-loss result for each match, rather than the goals scored or conceded. This rewards teams based on the most important aspect of a match --- the result --- rather than incentivising running up the score. 

The results from $37,653$ international matches dating back to $1872$ are taken from \cite{kaggle}. Table \ref{tab:kaggle} shows example data for the first and last match used in this work.
\begin{table}[H]
\caption{Example data used for each international match.}\label{tab:kaggle}
\vspace*{-5mm}
\begin{center}
\begin{tabular}{|c|c|c|c|c|c|}
\hline
Date & Team 1 & Team 2 & Outcome & Type \\
\hline
1872-03-08 & Scotland & England & Scotland Win & Friendly \\
\vdots &\vdots &\vdots &\vdots &\vdots \\
2017-09-05 & Egypt & Uganda & Egypt Win & WC qualifier \\
\hline 
\end{tabular}
\end{center}
\end{table}

The results from Table \ref{tab:kaggle} can be represented as a directed network, where team A defeating team B is modelled by a directed edge pointing from team A's node to team B's node. Such a network can be expressed in terms of an adjacency matrix $A \in \mathbb{R}^{n \times n}$ where $n \approx 220$ is the number of national teams. (Note that $n$ varies as some teams, such as North and South Vietnam, cease to exist.) 

Based on work done for American football rankings \cite{Keener, Newman2005}, we define the adjacency matrix as follows: $A_{ij}$ is the number of times team $j$ has defeated team $i$ in a given time period. In American football there are no draws, so in our model we interpret a draw as half of a win and half of a loss --- $A_{ij}$ and $A_{ji}$ are incremented by $0.5$ for each draw between team $i$ and team $j$ \cite{Lasek2013, Keener}. The number of wins for team $i$ is denoted their \textit{direct wins} (noting that includes the contributions of draws) and is equal to $\sum_j A_{ji}$ --- simply adding up all of the elements in the $i$\ts{th} column of $A$.

The number of indirect wins for team $i$ at distance $2$ (i.e., when team $i$ beat team $j$ who beat team $k$) can be written as $\sum_{kj} A_{kj} A_{ji}$.  
Similarly, we can write the number of indirect wins for team $i$ at distance $3$ as $\sum_{hkj} A_{hk} A_{kj} A_{ji}$, and so on, as outlined in Equation \eqref{eq:fifa}. 

Indirect wins at a distance $d$ are discounted by a multiplicative factor $\alpha^{d-1}$, where $\alpha<1$ is a free parameter called the \textit{indirect win factor}. From this construction, Newman and Park define a win score $w_i$ for team $i$:
\begin{align*}
w_i &= \sum_j A_{ji} + \alpha \sum_{kj} A_{kj} A_{ji} + \alpha^2 \sum_{hkj} A_{hk} A_{kj} A_{ji} + \cdots \\
&= \sum_j \left(1 + \alpha \sum_{k} A_{kj} + \alpha^2 \sum_{hk} A_{hk} A_{kj} + \cdots \right) A_{ji} \\
&= \sum_j \left(1 + \alpha \left[ \sum_{k} A_{kj} + \alpha \sum_{hk} A_{hk} A_{kj} + \cdots \right] \right) A_{ji} \\
&= \sum_j \left( 1 + \alpha w_j \right) A_{ji}.
\end{align*}

The loss score $l_i$ is defined in a similar way:
\begin{align*}
l_i &= \sum_j A_{ij} + \alpha \sum_{jk} A_{ij} A_{jk} + \alpha^2 \sum_{jkh} A_{ij} A_{jk} A_{kh} + \cdots \\
&= \sum_j A_{ij} (1 + \alpha l_j),
\end{align*}
using the same procedure as for the win score. (Note that we require the same condition on $\alpha$ for this sum to converge as well.)

As in many sports, we care about the difference between the win score and the loss score. We thus define the total score as $s_i = w_i - l_i$. The total score forms the basis of Newman and Park's ranking system \cite{Newman2005}. 

To vectorise the scores, we note that the win and loss scores can be rewritten as
\begin{align*}
\bm{w} = {\bm{k}}^{\text{out}} + \alpha A^T {\bm{w}} & \quad \implies \quad \bm{w} = \left( I - \alpha A^T \right)^{-1} \bm{k}^{\text{out}},  \\ 
\bm{l} = {\bm{k}}^{\text{in}} + \alpha A \bm{l} & \quad \implies \quad \bm{l} = \left( I - \alpha A \right)^{-1} \bm{k}^{\text{in}},
\end{align*}
where $\bm{w} = (w_1, w_2,\ldots)$, $\bm{l} = (l_1,l_2,\ldots)$, $\bm{k}^{\text{out}} = (\sum_j A_{j1}, \sum_j A_{j2}, \ldots)$, $\bm{k}^{\text{in}}=(\sum_j A_{1j}, \sum_j A_{2j}, \ldots)$, and $I$ is the $n \times n$ identity matrix. The vectors $\bm{k}^{\text{out}}$ and $\bm{k}^{\text{in}}$ are simply the vectorised versions of the direct wins and direct losses respectively.

In Newman and Park's model, each American football match had equal importance since they were part of a discrete season. However, this is not the case for international football. For that reason, our Static Model extends the Newman and Park model by allowing the contribution to $A_{ij}$ for each match to be multiplied by two factors: match importance and time past. We follow the FIFA convention and take the importance to be $4$ for World Cup matches, etc., as outlined in Section \ref{sec:fifa}. Since this network-based model no longer averages over the number of matches played (biasing results towards teams that play fewer friendly matches) this choice appears to be a reasonable method to assign importance where it is deserved. To make the static model mirror the FIFA Rankings (thus hopefully making it more appealing to FIFA for an easy change) we also use their block-wise depreciating factor: the contribution to $A_{ij}$ is multiplied by $1$ if it took place within the past year, $0.5$ if it was within two years, and so on.

Although left unproven in the previous two applications of this model \cite{Motegi2012,Newman2005}, we will show that $\alpha$ is limited to $\alpha \in [0,\ \lambda_{\text{max}}^{-1}]$, where $\lambda_{\text{max}}$ is the spectral radius of $A$. This limitation comes from interpreting the win and loss scores as generalisations of Katz centrality \cite{Katz} wherein Katz's `probability of effectiveness' parameter is also limited by the spectral radius of the matrix in question. The following Theorem, modified from \cite{Mic2009} to suit our purpose, shows why this is the case for the win score. The corresponding proof for the loss score is almost identical.
\begin{theorem}
The win score $w_i$ converges if and only if $\alpha < \lambda_{\text{max}}^{-1}$. 
\end{theorem}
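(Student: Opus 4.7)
The plan is to recognise the win-score recursion as generating a matrix Neumann series and apply the standard convergence criterion for such series in terms of the spectral radius.

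First I would iterate the displayed equation $\bm{w} = \bm{k}^{\text{out}} + \alpha A^T \bm{w}$ by back-substitution, obtaining
\[
\bm{w} \;=\; \sum_{m=0}^{\infty} \alpha^m (A^T)^m \bm{k}^{\text{out}},
\]
which is precisely the series definition of the $w_i$ written in vector form: expanding the $m$-th term componentwise reproduces the sum over indirect wins at distance $m+1$ that appears in the derivation above. Convergence of every $w_i$ is therefore equivalent to convergence of this vector-valued Neumann series.

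Next I would invoke the standard result that, for any square matrix $B$, the matrix geometric series $\sum_{m=0}^{\infty} B^m$ converges (to $(I - B)^{-1}$) if and only if its spectral radius satisfies $\rho(B) < 1$. Setting $B = \alpha A^T$ and using that $A$ and $A^T$ share a spectrum, so that $\rho(A^T) = \lambda_{\text{max}}$, together with $\alpha > 0$, one obtains $\rho(\alpha A^T) = \alpha \lambda_{\text{max}}$; hence convergence of the matrix series reduces to $\alpha < \lambda_{\text{max}}^{-1}$. This immediately yields the ``if'' direction and recovers the closed form $\bm{w} = (I - \alpha A^T)^{-1} \bm{k}^{\text{out}}$ already displayed in the text.

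For the ``only if'' direction a slightly subtler step is required, because in principle the matrix partial sums could diverge while their image of the specific vector $\bm{k}^{\text{out}}$ still converges. Here I would appeal to Perron-Frobenius: since $A$ has non-negative entries, $A^T$ admits a non-negative eigenvector $\bm{v}$ with eigenvalue $\lambda_{\text{max}}$; since $\bm{k}^{\text{out}}$ is itself non-negative and non-zero on the support of $\bm{v}$, one has $\bm{v}^{T}\bm{k}^{\text{out}} > 0$, which forces the $m$-th term $\alpha^m (A^T)^m \bm{k}^{\text{out}}$ to grow at least like $(\alpha \lambda_{\text{max}})^m$ along $\bm{v}$, so the series diverges whenever $\alpha \geq \lambda_{\text{max}}^{-1}$. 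I expect this ``only if'' direction to be the main obstacle: it relies implicitly on the mild genericity condition that $\bm{k}^{\text{out}}$ is not orthogonal to the Perron eigenvector of $A^T$, which holds trivially for any realistic football dataset. Following the style of \cite{Mic2009}, I would either cite Perron-Frobenius as a black box or simply present the divergence as a standard fact about non-negative matrices and mention the genericity condition in passing.
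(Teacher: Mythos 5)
Your proposal is correct and follows essentially the same route as the paper: both recognise the win score as a matrix geometric (Neumann) series and apply the spectral-radius criterion $\rho(\alpha A) < 1$, with Perron--Frobenius supplying the non-negativity needed to drop the absolute value. The only real difference is that you are more careful on the ``only if'' direction --- the paper identifies the win scores with the column sums of $T = \sum_{k\geq 1} \alpha^{k-1} A^k$ and treats divergence of the matrix series as settling the matter, whereas you correctly flag that divergence of the series applied to the specific vector $\bm{k}^{\text{out}}$ requires the (mild, and here automatic by non-negativity of the partial sums) condition that $\bm{k}^{\text{out}}$ is not annihilated along the Perron direction.
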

\begin{proof}

Consider the matrix $A^k$. If $k=1$, then the $i$\ts{th} column sum of $A^1$ is the number of direct wins for team $i$. Similarly, the $i$\ts{th} column sum of $A^2$ is the number of indirect wins for team $i$ at distance $2$, and so on. For this reason, we define the matrix 
$$T = A + \alpha A^2 + \alpha^2 A^3 + \cdots,$$ 
with free parameter $\alpha$. The $i$\ts{th} column sum of $T$ will therefore be the win score for team $i$. We can rewrite $T$ as follows:
\begin{align*}
\alpha T &= (\alpha A) + (\alpha A)^2 + (\alpha A)^3 + \cdots \\ 
&= \sum_{k=1}^{\infty} (\alpha A)^k.
\end{align*}
This sum converges if and only if $\abs{\alpha \lambda_{\text{max}}} < 1$, since it involves a geometric series of matrices. By the Perron-Frobenius Theorem for non-negative matrices, $\lambda_{\text{max}} \geq 0$ (as $A$ is irreducible). Since we have defined $\alpha \geq 0$ we may remove the absolute value signs. Therefore, $T$ (and hence the win score) converges if and only if $\alpha < \lambda_{\text{max}}^{-1}$.





\end{proof}

To summarise, the Static Model has replaced the continental and opposing team multipliers with one parameter, $\alpha \in [0,\ \lambda_{\text{max}}^{-1}]$, which is a measure of the relative importance of indirect wins. 
Newman and Park's elegant mathematical framework for ranking sports teams is easily generalisable --- to other sports, or to incorporate margin of victory or home field advantage, for example. However, one of the most attractive qualities of this model is the simplicity of having only one free parameter, $\alpha$. For this reason, we hypothesise that the Static Model is a candidate to replace the FIFA Rankings. 

\subsection{Parameter Estimation}

To support this claim, we compare the predictions of the Static Model to those of FIFA. Using the Static Model, we construct rankings for each of the past three Octobers preceding a World Cup (when the pools were selected). As shown in Figure \ref{fig:stat}, the predictive accuracy of the Static Model is comparable to FIFA for $\alpha \geq 0.7 \times \lambda_{\text{max}}^{-1}$ but on average it is lower by roughly $5\%$.   
\begin{figure}[H]
\begin{center}
\includegraphics[width=0.75\textwidth]{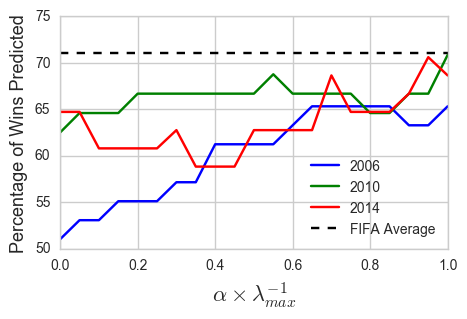}
\end{center}
\vspace*{-5mm}
\caption{Comparison of Static Model with FIFA Rankings for the past three World Cups. Sample $\alpha$ values were taken in the range $[0,\ \lambda_{\text{max}}^{-1}]$ with a step size of $0.05 \times \lambda_{\text{max}}^{-1}$.}\label{fig:stat}
\end{figure}

The question remains, however, which $\alpha$ value should be selected. As can be seen in Table \ref{tab:lamb}, the values for $\lambda_{\text{max}}^{-1}$ are reasonably consistent and generally fall in the range $0.05$--$0.06$. The fluctuations motivate our definition of $\alpha$ as a fraction of $\lambda_{\text{max}}^{-1}$ rather than as a fixed constant, in case future adjacency matrices differ substantially from those in the past.

\begin{table}[H]
\caption{$\lambda_{\text{max}}^{-1}$ values for the adjacency matrices preceding each World Cup.}\label{tab:lamb}
\vspace*{-5mm}
\begin{center}
\begin{tabular}{|c|c|c|}
\hline
World Cup & $\lambda_{\text{max}}$ & $\lambda_{\text{max}}^{-1}$ \\
\hline
2006 & 20.9 & 0.048 \\
\hline
2010 & 18.8 & 0.053 \\
\hline
2014 & 17.6 & 0.057 \\
\hline
\end{tabular}
\end{center}
\end{table}

Newman and Park selected their $\alpha$ based on its ability to predict the results from the past season, denoted `retrodictive accuracy.' Using this approach on our data, it was determined that the $\alpha$ value which maximised retrodictive accuracy was $0$. However, maximising retrodictive accuracy fails to capture a meaningful ranking of the teams for predictive purposes, as can be seen by the low predictive accuracy in Figure \ref{fig:stat} when $\alpha=0$. We suggest that a reasonable value for $\alpha$ is $0.95 \times \lambda_{\text{max}}^{-1}$. However, we note that the predictive accuracy is relatively insensitive to varying $\alpha$ in the range $0.7$--$1.0$ ($\times \lambda_{\text{max}}^{-1}$). 

For Newman and Park's application to American football, the rankings were specific to each discrete season. This meant that they required an absolute value for $\alpha$ rather than having it defined in terms of $\lambda_{\text{max}}$. In our model, however, we can use a sliding four-year window for the calculation of $\lambda_{\text{max}}$. The $\alpha$ value that maximised their retrodictive accuracy, $0.8\times \lambda_{\text{max}}^{-1}$, falls into the range for our model that is relatively insensitive to $\alpha$ variation \cite{Newman2005}. Since Newman and Park chose their $\alpha$ based on Monte Carlo simulations and retrodictive accuracy maximising heuristics, we feel justified in taking $\alpha = 0.95 \times \lambda_{\text{max}}^{-1}$ as the single parameter for the Static Model. (The limitations of this approach will be discussed in Section \ref{sec:disc}.)

\section{Dynamic Model}\label{sec:dynamic}
\subsection{Derivation}

Recently, there has been a developing body of work that suggests dynamic network-based ranking systems are preferable to their static counterparts \cite{Holme2012,Liao2017}. In such a temporal network, information about \textit{when} an event occurred can also contribute meaningfully to the data. The quality of football teams fluctuates over time, and we believe that defeating a team at their peak should be worth more than defeating them when they are in a slump. We also believe that wins yesterday should be rewarded more than wins from 364 days ago. For these reasons, we will now construct a network-based Dynamic Model following the notation from a similar application to tennis by Motegi and Masuda \cite{Motegi2012}. We desire a model which has increased predictive accuracy while maintaining the simplicity and intuitive structure of the Static Model. To do so, we will develop a model which discounts indirect wins and bygone wins in a symmetric way.

Let $A_{t_n}$ be the adjacency matrix for the matches played at $t_n$, as in the Static Model. The resolution of $t_n$ is equal to one day, meaning that each adjacency matrix is very sparse; only a handful of matches are played on each day, if any. The vectorised dynamic win score at time $t_n$ is defined as 
\begin{equation}
\bm{w}_{t_n} = W_{t_n}^T \bm{1},
\end{equation}
where 
\begin{align}\label{eq:dyn}
W_{t_n} =  A_{t_n} &+ \me^{-\beta (t_n - t_{n-1})} \mathlarger{\sum}_{m_n \in \{0,1\}} \alpha^{m_n} A_{t_{n-1}} A_{t_n}^{m_n} \nonumber \\
& + \left( \me^{-\beta (t_n - t_{n-2})} \mathlarger{\sum}_{m_{n-1},~ m_n \in \{0,1\}} \alpha^{m_{n-1}+m_n} A_{t_{n-2}} A_{t_{n-1}}^{m_{n-1}} A_{t_n}^{m_n} \right) \\
& + \cdots + \me^{-\beta (t_n - t_{1})} 
\mathlarger{\sum}_{m_{2},~ \ldots,~ m_n \in \{0,1\}} \alpha^{\sum_{i=2}^n  m_i} A_{t_{1}} A_{t_{2}}^{m_{2}} \cdots A_{t_n}^{m_n}. \nonumber
\end{align}

The dynamic win score is significantly more complex than its static counterpart due to the discounting factors for both indirect wins and bygone wins. The first term after $A_{t_n}$ represents the contribution to the win score from indirect wins that terminated (or direct wins that occurred) at $t_{n-1}$. To be precise, $m_n=0$ represents direct wins (hence $\alpha$ is raised to the power of $0$) whereas $m_n=1$ represents indirect wins at distance $2$ (hence $\alpha$ is raised to the power of $1$). For this reason, $m_i$ can be interpreted as a binary indicator of whether the (indirect) win involves a match at time $t_i$. To illustrate how this formula captures each possible time and distance of indirect victory, Table \ref{tab:four} summarises the weights and meanings of the four contributions to the term shown in brackets in Equation \eqref{eq:dyn}. These four contributions represent the contribution to the win score from indirect wins that terminated (or direct wins that occurred) at $t_{n-2}$; each term thereafter follows the same logic.
\begin{table}[H]
\caption{Summary of contributions to Equation \eqref{eq:dyn} for different $m_{n-1}$ and $m_n$ values.}\label{tab:four}
\vspace*{-5mm}
\begin{center}
\begin{tabular}{|c|c|c|c|}
\hline
$m_{n-1}$ & $m_n$ & Weight & Meaning \\
\hline 
0 & 0 & $\me^{-\beta (t_n - t_{n-2})}$ & Direct wins at $t_{n-2}$ \\
\hline
0 & 1 & $\alpha \me^{-\beta (t_n - t_{n-2})}$ & Indirect wins (distance $2$) based on $t_{n-2}$, $t_n$ \\
\hline
1 & 0 & $\alpha \me^{-\beta (t_n - t_{n-2})}$ & Indirect wins (distance $2$) based on $t_{n-2}$, $t_{n-1}$ \\
\hline
1 & 1 & $\alpha^2 \me^{-\beta (t_n - t_{n-2})}$ & Indirect wins (distance $3$) based on $t_{n-2}$, $t_{n-1}$, $t_n$ \\
\hline
\end{tabular}
\end{center}
\end{table}

In this dynamic model, the order in which results occur influences whether indirect wins are awarded. If team A defeats team B and then team B defeats team C three years later, team A would not be credited with an indirect win. However, if team B had already recently defeated team C (likely meaning its quality was high at that time), then team A would receive an indirect win if it beat team B. This difference allows the dynamic model to capture the importance of a team's current quality during each match, rather than simply their quality at the end of the testing period. 

We can rearrange Equation \eqref{eq:dyn} to obtain an iterative expression for $W_{t_n}$:
\begin{align}\label{eq:dynsimp}
W_{t_n} &=  A_{t_n} + \me^{-\beta (t_n - t_{n-1})}  \mathlarger{\mathlarger{[}} A_{t_{n-1}} + \me^{-\beta (t_{n-1} - t_{n-2})}  \mathlarger{\sum}_{m_{n-1} \in \{0,1\}} \alpha^{m_{n-1}} A_{t_{n-2}} A_{t_{n-1}}^{m_{n-1}} + \cdots \nonumber \\
& + \me^{-\beta (t_{n-1} - t_{1})} 
\mathlarger{\sum}_{m_{2}, \ldots,~ m_{n-1} \in \{0,1\}} \alpha^{\sum_{i=2}^{n-1}  m_i} A_{t_{1}} A_{t_{2}}^{m_{2}} \cdots A_{t_{n-1}}^{m_{n-1}} \mathlarger{\mathlarger{]}} \mathlarger{\sum}_{m_n \in \{0,1\}} \alpha^{m_n} A_{t_n}^{m_n} \nonumber \\
& = A_{t_n} + \me^{-\beta (t_n - t_{n-1})} W_{t_{n-1}} (I + \alpha A_{t_n}).
\end{align}

This iterative expression leads to an update equation for the dynamic win score:
\begin{equation}
\bm{w}_{t_n} = 
\begin{cases}
A_{t_1}^T \bm{1},\qquad\qquad\qquad\qquad\qquad\qquad \qquad \text{ if } n=1, \\
A_{t_n}^T \bm{1} + \me^{-\beta (t_n - t_{n-1})} \left(I + \alpha A_{t_n}^T \right) \bm{w}_{t_{n-1}}, \qquad \text{if } n>1.
\end{cases}
\end{equation}

The dynamic loss score can be defined in the same way simply by replacing $A_{t_n}$ by $A_{t_n}^T$ (since this essentially switches wins to losses):
\begin{equation}
\bm{l}_{t_n} = 
\begin{cases}
A_{t_1} \bm{1},\qquad\qquad\qquad\qquad\qquad\qquad \qquad \hspace{0.3em}\text{ if } n=1, \\
A_{t_n} \bm{1} + \me^{-\beta (t_n - t_{n-1})} \left(I + \alpha A_{t_n} \right) \bm{l}_{t_{n-1}}, \quad\qquad \text{if } n>1.
\end{cases}
\end{equation}

As in the static model, we define the dynamic win-loss score at time $t_n$ in vector form as $\bm{s}_{t_n} = \bm{w}_{t_n} - \bm{l}_{t_n}$.
 
We make the stylistic choice in this work to redefine the decay parameter $\me^{-\beta (t_n - t_{n-1})}$ in terms of a more easily understandable parameter which will be interpreted in more detail in Section \ref{sec:param}. We define $\tau$ to be the \textit{bygone win factor}:
\begin{equation}
\tau = \me^{-\beta (365 ~\text{days})}.
\end{equation}
This definition symmetrises the decay rates for indirect wins and bygone wins, based on the following logic: each increase in distance gets discounted by a factor of $\alpha$, while each additional year in the past gets discounted by a factor of $\tau$. Although we could interpret the timescale of $\tau$ as days past, we believe that a yearly factor makes more intuitive sense and hence is more likely to be used in the real world. This model will allow conclusions to be drawn about the relative importance of indirect wins and bygone wins in terms of meaningful parameters, rather than the obscure $\beta$ used by Motegi and Masuda \cite{Motegi2012}.

\subsection{Parameter Estimation}


In the Dynamic Model, we no longer have any limitation on $\alpha$ as the decay function in time automatically ensures the convergence of the series. (At each time $t_n$ the network is acyclic; there are no teams that play two matches on the same day.) Therefore, we consider all $\alpha \geq 0$ \cite{Motegi2012}. We consider only $\tau \in [0,\ 1]$ since we don't want the importance of bygone wins to exponentially increase. Figure \ref{fig:heat2010} shows the predictive accuracy of the Dynamic Model for the 2010 and 2014 World Cups with a range of $\alpha$ and $\tau$ values. We can clearly see that the highest predictive accuracy tends to arise when $\alpha = 0.07$. For $\tau \in [0.88,\ 0.96]$, the predictive accuracy with this $\alpha$ value is $79\%$ for both World Cups --- much higher than the average predictive accuracy of the FIFA Rankings.
\begin{figure}[H]
\begin{center}
\centering
\begin{subfigure}[t]{\textwidth}
\centering
\includegraphics[width=0.72\textwidth]{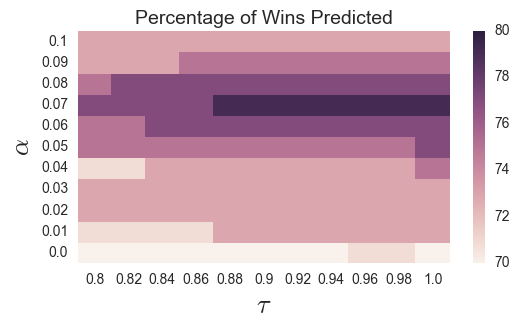}
\caption{2010 World Cup.}
\end{subfigure}
\begin{subfigure}[t]{\textwidth}
\centering
\includegraphics[width=0.72\textwidth]{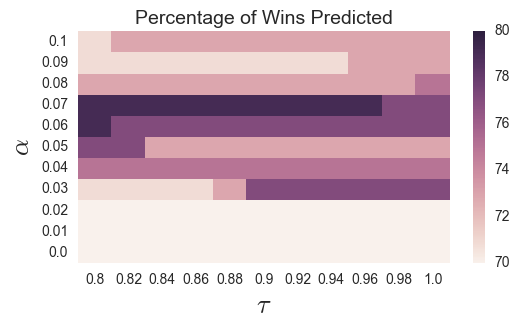}
\caption{2014 World Cup.}
\end{subfigure}
\end{center}
\vspace*{-5mm}
\caption{Investigation of the effect of $\alpha$ and $\tau$ on the predictive accuracy for the 2010 and 2014 World Cups using the Dynamic Model.}\label{fig:heat2010}
\end{figure}

We must determine whether this is indeed a significant improvement on the FIFA Rankings. To do so, we take $\tau = 0.9$ as this value falls inside the predictive accuracy maximising range. From here on, $\alpha$ is taken to be $0.07$. Table \ref{tab:preds} summarises the predictive accuracy values for the past three World Cups using these parameters.
\begin{table}[H]
\caption{Predictive accuracy of the Dynamic Model and FIFA Rankings for the past three World Cups, using $\alpha=0.07$ and $\tau=0.9$.}\label{tab:preds}
\vspace*{-5mm}
\begin{center}
\begin{tabular}{|c|c|c|}
\hline
World Cup & Dynamic Model accuracy (\%) & FIFA Rankings accuracy (\%) \\
\hline
2006 & 71 & 63 \\
2010 & 79 & 67 \\
2014 & 79 & 84 \\
\hline
Average & $76 \pm 5$ & $71 \pm 11$ \\
\hline
\end{tabular}
\end{center}
\end{table}
The results from this table are promising, if not conclusive; the Dynamic Model is at least as effective as the FIFA Rankings in predicting the results of World Cups. Having a lower standard deviation is also a desirable quality --- we want a model with high predictive accuracy, but also one which is consistent through time rather than fluctuating wildly like the FIFA Rankings.

\section{Analysis}\label{sec:pred}
\subsection{Predictive Accuracy}

We will now examine the predictive accuracy for all matches (not just World Cups). To do so, we calculate the rank of each team at each $t_n$ using $\alpha=0.07$ and $\tau = 0.9$. If a match result at $t_{n}$ is inconsistent with the prediction of the ranking from $t_{n-1}$, then we say that an \textit{upset} occurred at $t_{n}$. The predictive accuracy at $t_{n}$ is defined as
\begin{equation}
P_{t_{n}} = \frac{N_{t_{n}} - u_{t_{n}}}{N_{t_{n}}},
\end{equation}
where ${N_{t_{n}}}$ is the number of matches with a winner from $t_1$ to $t_n$ and $u_{t_n}$ is the number of upsets from $t_1$ to $t_n$. Figure \ref{fig:retro} shows that the predictive accuracy fluctuates around  $72$--$73\%$. The fact that this predictive accuracy is very near the $76\%$ World Cup predictive accuracy allows us to hypothesise that one can expect this level of accuracy from the Dynamic Model in general.
\begin{figure}[H]
\begin{center}
\includegraphics[width=0.75\textwidth]{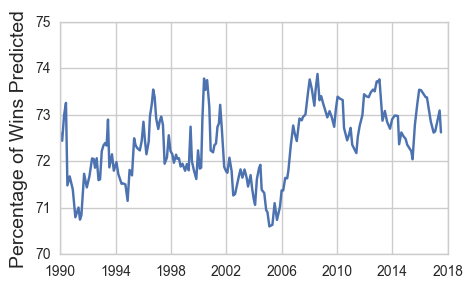}
\end{center}
\vspace*{-5mm}
\caption{Monthly predictive accuracy of the Dynamic Model with $\alpha=0.07$ and $\tau=0.9$.}\label{fig:retro}
\end{figure}

\subsection{Parameter Interpretation}\label{sec:param}

Parameter estimation for the Dynamic Model allows us to maximise the predictive accuracy and also leads to conclusions about the parameters themselves. Since we have purposely constructed a model whose parameters have real-world meaning, we can now interpret this meaning. 

Based on the results of Section \ref{sec:dynamic}, we determined that $\alpha = 0.07$ was the optimal value to maximize predictive accuracy. This means that an indirect win is worth $7\%$ of a direct win towards a team's win score. So to all football fans discussing the merit of the `team A beat team B who beat team C' argument, it can now be defended with a mathematical model, rather than simply more beer.

The other parameter, $\tau$, is even more interesting. It was shown that the optimal $\tau$ for World Cup predictive accuracy is roughly $0.9$. This suggests that a win a year ago is worth $90\%$ of a win yesterday, while a win two years ago is worth only $81\%$, and so on. This measure of relative importance is strongly at odds with the block-wise depreciation of the current FIFA Rankings. As can be seen in Figure \ref{fig:time}, the Dynamic Model suggests that bygone wins are much more important for predictive accuracy than FIFA gives them credit for. This is reasonably unsurprising; the results of the past two World Cups are both likely to still have some importance.
\begin{figure}[H]
\begin{center}
\includegraphics[width=0.75\textwidth]{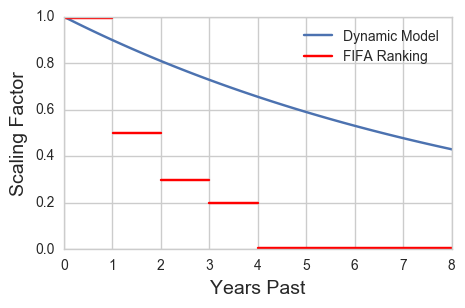}
\end{center}
\vspace*{-5mm}
\caption{Comparison of temporal scaling factors with $\tau=0.9$.}\label{fig:time}
\end{figure}

\subsection{Ranking Evolution}

Using the Dynamic Model, with $\alpha=0.07$ and $\tau=0.9$, we can now examine the rank of specific teams. Figure \ref{fig:teams} shows the monthly Dynamic Model rank for three traditionally strong teams. First, we note that the Dynamic Model is not overly sensitive to recent results; Germany, for example, stays within the top eight teams in the world for the past seven years. This model also accurately captures the precipitous end of Spain's reign at the top of world football near 2014, a finding that many ranking systems agree with.
\begin{figure}[H]
\begin{center}
\includegraphics[width=0.75\textwidth]{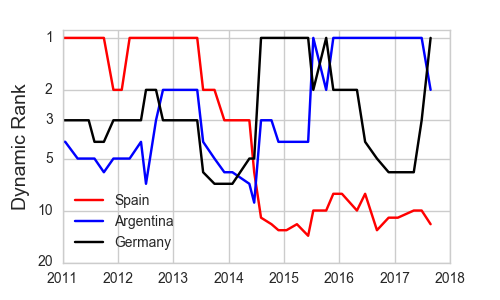}
\end{center}
\vspace*{-5mm}
\caption{Monthly Dynamic Model ranking for three teams using $\alpha=0.07$ and $\tau=0.9$.}\label{fig:teams}
\end{figure}

\section{Discussion}\label{sec:disc}

Having presented two models, it is clear that the Dynamic Model outperforms the Static Model. There is a tradeoff between simplicity and predictive accuracy between the two, and we acknowledge that predictive accuracy should not be the only metric by which ranking systems are measured. If it were, teams should simply be ranked by bookmakers or by complex algorithms like ESPN's Soccer Power Index which takes into account each individual player's current club performance \cite{spi}. Complexity simply for the sake of predictive accuracy has drawbacks. A lack of transparency or comprehensibility can hinder its adoption --- especially with an organisation such as FIFA that, mired in recent controversy over corruption, likely wishes to be as candid as possible. We believe that the Dynamic Model, while more complex than its static counterpart, still fits this mould. It has only two (intuitive) parameters, it reduces continental bias, and it removes the obvious method of exploitation. The dependence on $\lambda_{\text{max}}$ from the Static Model is also removed, meaning that the parameters are independent of the adjacency matrix at any point in time. Finally, the dynamic nature means that this model can capture the temporal fluctuations in team quality, a feature that results in roughly $10\%$ higher predictive accuracy than the Static Model. 

One apparent issue with our metric of predictive accuracy is that it is taken in October of the preceding year --- would this make all of our predictions outdated? Almost certainly, to an extent. If a team suddenly gets much better or worse, then presumably our model will not have predicted this all the way back in October. To examine the influence that the time of prediction has on the predictive accuracy, we carried out a similar analysis using the Dynamic Model rankings the day before a World Cup started. For the past three World Cups, the day-before predictions were one match ($\approx 2\%$) better than the October predictions, on average. This improvement is expected, but its small size suggests that the upsets which occur during a World Cup really are upsets and do not simply reflect outdated rankings. 

The three major critiques of the FIFA Rankings are that their predictive accuracy is low, they contain continental bias, and they can be exploited by careful selection of which teams should be played when. The Dynamic Model clearly remedies the first of these critiques, with an improvement of $5\%$ in World Cup predictive accuracy. It also replaces the team and continent multipliers with the concept of indirect wins, removing the continental bias. Finally, it removes the obvious method of exploitation of the FIFA Rankings where teams can improve their average point score by playing fewer friendly matches. However, this work has not examined how to exploit the Dynamic Model. Crucial future work could involve analysing the upsets: are they primarily from one continent? Is our model exploitable by any scheduling method? Could a team optimise their rank by taking strategic breaks after large upsets to not risk falling in the ranks? All of these questions are valid, and would make for fascinating future work. However, the primary purpose of this work was to address the main critiques of the FIFA Rankings and to provide a justifiable alternative: the network-based Dynamic Model.

\subsection*{Acknowledgements}

Thank you to Renaud Lambiotte for his helpful comments on a draft of this work.

\newpage
\small

\end{document}